\documentclass{article}
\usepackage{spconf,amsmath,amssymb,amsthm,graphicx,booktabs,url}
\usepackage[hidelinks]{hyperref}

\newtheorem{proposition}{Proposition}
\newtheorem{corollary}{Corollary}

\title{When EER Hides Deployment Failure: Auditing Threshold Transfer and Unlabeled Score Calibration for Speech Deepfake Detectors}

\name{Jingwen Zhou \qquad Mingzhe Wang}
\address{Xidian University, Xi'an, China}

\begin{document}
\ninept
\maketitle

\begin{abstract}
Speech deepfake countermeasures (CMs) are compared almost exclusively by equal error rate (EER), a metric computed at an oracle threshold chosen on the labeled test set. Deployed CMs enjoy no such oracle: a threshold must be fixed in advance and applied to unlabeled target data. We audit this gap with a frozen state-of-the-art SSL-AASIST detector trained on ASVspoof~2019~LA. While its in-domain EER is 0.21\%, transferring its LA-calibrated threshold to the In-the-Wild corpus yields a half total error rate (HTER) of 39.5\%, with 78.7\% of bona fide speech rejected---even though the In-the-Wild EER (11.2\%) appears moderate. We then test whether popular unlabeled test-time corrections close this gap, and first prove a simple proposition: any strictly increasing score transform, including z-norm, temperature/shift calibration, and embedding mean alignment under a frozen linear head, cannot change EER. An audit of seven corrections on In-the-Wild and ASVspoof~2021~DF confirms the proposition empirically and exposes two further failure modes: AS-norm with an unlabeled target cohort collapses (EER 11.2\%$\to$60.2\%), and pseudo-label calibration that reduces HTER by 38\% relative on In-the-Wild degenerates to 50\% HTER on DF21, whose spoof prior is 96\%. No audited correction reduces EER by more than 1\% relative. We recommend reporting HTER at a transferred threshold alongside EER.
\end{abstract}

\begin{keywords}
anti-spoofing, audio deepfake detection, calibration, domain shift, evaluation methodology
\end{keywords}

\section{Introduction}
\label{sec:intro}

Speech deepfake countermeasures have evolved rapidly, from hand-crafted spectral features \cite{todisco2017cqcc} through end-to-end raw-waveform networks \cite{tak2021rawnet2,jung2022aasist} to self-supervised front-ends \cite{baevski2020wav2vec,babu2022} fine-tuned for spoofing detection \cite{wang2022odyssey,tak2022}, with progress tracked by the ASVspoof challenge series \cite{wu2015asvspoof,todisco2019,yamagishi2021,wang2024asvspoof5}. State-of-the-art systems now reach near-zero EER on the benchmark they are trained on, with self-supervised front-ends coupled to graph attention back-ends reporting below 0.3\% EER on ASVspoof~2019~LA \cite{tak2022,jung2022aasist}. Yet cross-corpus fragility has been reported consistently for a decade \cite{wu2015survey,korshunov2016,chen2020odyssey,zhang2021oneclass}: the same systems degrade by one to two orders of magnitude on out-of-domain data such as the In-the-Wild (ITW) corpus \cite{muller2022} or multilingual collections \cite{muller2024mlaad}, and train-time augmentation \cite{tak2022rawboost} cannot anticipate every deployment shift. This generalization gap is well documented. Far less appreciated is that \textbf{EER systematically understates the resulting deployment damage}, because EER is evaluated at an oracle threshold selected \emph{a posteriori} on the labeled test set. A deployed CM has no oracle: its threshold must be fixed in advance, typically on labeled source-domain data, and applied unchanged to unlabeled target traffic.

This paper audits that scenario end-to-end. We freeze the official SSL-AASIST model \cite{tak2022} trained on ASVspoof~2019~LA \cite{todisco2019,wang2020asvspoof}, fix its threshold at the LA EER operating point, and measure the half total error rate (HTER) when this threshold is transferred to ITW \cite{muller2022} and ASVspoof~2021~DF \cite{yamagishi2021}. The result is stark: an ITW EER of 11.2\%---a number suggesting a degraded but usable system---coexists with a transferred-threshold HTER of 39.5\%, driven by a 78.7\% false rejection rate on bona fide speech. The dominant failure is the operating point, not the ranking (Fig.~\ref{fig:dist}).

Practitioners facing this problem commonly turn to \emph{unlabeled} test-time corrections imported from neighboring fields: score normalization from speaker verification \cite{auckenthaler2000,cumani2011,matejka2017}, temperature scaling from probability calibration \cite{guo2017}, correlation alignment (CORAL) from the domain adaptation literature \cite{sun2016}, and batch-norm statistics adaptation from test-time adaptation \cite{schneider2020,wang2021tent}. Our central question is which of these can---in principle and in practice---repair the EER, the operating point, or both. Our contributions are:
\begin{itemize}\setlength\itemsep{1pt}
\item a formal monotone-invariance proposition showing that a large family of unlabeled score calibrations (z-norm, temperature/shift, embedding mean alignment under a frozen linear head) cannot change EER, only the operating point (Sec.~\ref{sec:prop});
\item an audit of seven unlabeled corrections on a frozen SSL-AASIST over ITW and ASVspoof~2021~DF, reporting EER and transferred-threshold HTER under one consistent protocol (Secs.~\ref{sec:methods}--\ref{sec:exp});
\item three empirically documented failure modes: provable EER invariance; collapse of AS-norm with an unlabeled target cohort (EER $\to$ 60\%), a negative transfer from ASV practice; and strong sensitivity of pseudo-label calibration to the unknown target prior (HTER 24.4\% on ITW vs.\ a degenerate 50\% on DF21);
\item a concrete recommendation: report HTER at a transferred threshold alongside EER.
\end{itemize}

\textbf{Related work.} Borodin \emph{et al.} \cite{borodin2026} evaluate public CMs at an EER threshold calibrated on pooled external benchmarks; their audit, complementary to ours, concerns spoof-only multilingual data and reports spoof rejection only, without bona fide traffic, calibration methods, or invariance analysis. Khan \emph{et al.} \cite{khan2026trace} report gaps between ``transfer EER'' and ``free EER'' for partial-deepfake detection and address them through feature design rather than test-time correction. Test-time adaptation of deepfake detectors has been studied for \emph{images} \cite{nguyenle2025}; we are not aware of a systematic audit of unlabeled test-time corrections for speech CMs.

\section{Threshold Transfer and Monotone Invariance}
\label{sec:prop}

\textbf{Setup.} A CM assigns a scalar score $s(x)\in\mathbb{R}$, higher meaning more bona fide; throughout, $s=\ell_{\text{bona}}-\ell_{\text{spoof}}$ is the logit difference. On a labeled test set, $\mathrm{FRR}(\tau)$ and $\mathrm{FAR}(\tau)$ denote the bona fide rejection and spoof acceptance rates at threshold $\tau$; EER is their common value at the threshold where they coincide. Threshold-aware alternatives exist---the tandem t-DCF \cite{kinnunen2018tdcf} and the calibration-sensitive $C_{\mathrm{llr}}$ \cite{brummer2006}---but EER remains the de facto headline metric in CM research. We adopt the \emph{threshold transfer} protocol: $\tau_{\mathrm{src}}$ is fixed at the EER point of the labeled \emph{source} evaluation set, and on the target we report $\mathrm{HTER}(\tau_{\mathrm{src}})=\tfrac12[\mathrm{FRR}(\tau_{\mathrm{src}})+\mathrm{FAR}(\tau_{\mathrm{src}})]$ together with its FRR/FAR decomposition. Concurrent work adopts the same protocol to probe cross-lingual generalization on spoof-only corpora \cite{lrlspoof2026}; we instead use it to isolate the operating-point component of the cross-domain failure and to audit which unlabeled corrections can repair it.

\begin{proposition}[Monotone invariance of EER]
\label{prop:mono}
Let $g:\mathbb{R}\to\mathbb{R}$ be strictly increasing on the range of $s$. Then $g\circ s$ and $s$ produce identical ROC curves on any test set, and in particular identical EER. This holds regardless of how $g$ is estimated, including from unlabeled target data.
\end{proposition}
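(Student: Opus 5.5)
The plan is to show that a strictly increasing $g$ induces a bijection between thresholds for $s$ and thresholds for $g\circ s$ under which the accept/reject decision of every trial is unchanged. The ROC curve is the set of pairs $(\mathrm{FAR}(\tau),\mathrm{FRR}(\tau))$, and EER is read off that curve, so both are then invariant. Fix a finite labeled test set with scores $s_1,\dots,s_n$ and use the decision rule ``accept as bona fide iff score $\ge \tau$''. A strict inequality convention works identically.

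\textbf{Step 1 (pointwise equivalence).} For any threshold $\tau$ for $s$, I will exhibit a threshold $\tau'$ for $g\circ s$ with $s_i\ge\tau \iff g(s_i)\ge\tau'$ for all $i$, and the converse. Since the test set is finite, only thresholds between consecutive order statistics matter. If $\tau$ lies in $(s_{(k)},s_{(k+1)}]$ of the sorted distinct scores, I take $\tau'=g(s_{(k+1)})$. The edge cases $\tau$ below the minimum or above the maximum map to $g(\min)$ and to anything above $g(\max)$. Strict monotonicity gives $s_i\ge s_{(k+1)}\iff g(s_i)\ge g(s_{(k+1)})$. For the reverse direction, given $\tau'$ I take the smallest $g(s_{(j)})\ge\tau'$ and pull it back to $\tau=s_{(j)}$. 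The point to handle carefully is that $g$ need only be strictly increasing on the range of $s$, so it need not be continuous or surjective. This is why I define the correspondence through the finitely many observed scores rather than through $g^{-1}$ on all of $\mathbb{R}$.

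\textbf{Step 2 (ROC and EER).} Because the accepted sets coincide trial by trial, $\mathrm{FRR}$ and $\mathrm{FAR}$ agree at corresponding thresholds. The sets of achievable operating points are therefore equal, so the ROC curves are identical; the same holds for the interpolated ROC curve, since it is built from these points. EER is a functional of the ROC curve alone, namely its intersection with the line $\mathrm{FAR}=\mathrm{FRR}$, or whichever discrete convention is used (the minimum over $\tau$ of $\max(\mathrm{FAR},\mathrm{FRR})$, or the ROCCH variant). It is therefore unchanged.

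\textbf{Step 3 (estimation from target data).} The argument uses only that $g$ is a fixed strictly increasing function applied identically to every trial. If $g$ is fitted on the unlabeled target set itself, as in z-norm with target mean and variance or temperature/shift fitted on target scores, it is still a single function applied to all trials once fitted, so Steps 1--2 apply. This remark must exclude trial-dependent transforms such as AS-norm, where the normalization parameters differ per trial, so the composite map is not of the form $g\circ s$. I expect the main obstacle to be stating this scope condition precisely, together with the non-surjective threshold correspondence in Step 1; everything else is routine.
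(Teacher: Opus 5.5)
Your proposal is correct and follows the paper's argument: both show that thresholding $s$ and thresholding $g\circ s$ realize exactly the same family of accept sets. Hence both scores give the same achievable $(\mathrm{FRR},\mathrm{FAR})$ pairs and the same EER. The one difference is the threshold correspondence. The paper uses the generalized inverse $g^{\dagger}$, while you map thresholds through the finitely many observed order statistics. Your route is slightly more careful: it avoids having to define $g^{\dagger}$ relative to the range of $s$, and it avoids the $\ge$ versus $>$ slip at the boundary score when $g$ jumps.
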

\begin{proof}
For every $\tau$, $\{x: g(s(x))\ge\tau\}=\{x: s(x)\ge g^{\dagger}(\tau)\}$, where $g^{\dagger}$ is the generalized inverse of $g$. Both scores therefore realize exactly the same family of decision sets, hence the same set of achievable $(\mathrm{FRR},\mathrm{FAR})$ pairs.
\end{proof}

\begin{corollary}
\label{cor:family}
The following unlabeled corrections leave EER unchanged:
(a)~\emph{z-norm}, $s'=(s-\hat\mu_T)/\hat\sigma_T$ with unlabeled target statistics;
(b)~\emph{temperature/shift calibration}, $s'=(s+b)/T$ with $T>0$, however $T$ and $b$ are fitted;
(c)~\emph{embedding mean alignment}, $e'=e-\hat\mu_T+\hat\mu_S$, under a frozen linear head, since it adds a constant to each logit and is therefore a positive affine map of $s$.
\end{corollary}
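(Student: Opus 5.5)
The plan is to reduce each of (a)--(c) to Proposition~\ref{prop:mono}. For each correction I would exhibit a map $g:\mathbb{R}\to\mathbb{R}$ with $s'(x)=g(s(x))$ for every test trial $x$, and check that $g$ is strictly increasing. The key observation is that once the unlabeled target data has been used to fit the parameters, those parameters are frozen constants. The correction is therefore the same deterministic function applied to every trial. This is exactly the setting the proposition allows: it holds ``regardless of how $g$ is estimated''. The fitting step may depend on the whole test set, but the resulting map does not vary from trial to trial, so the decision sets are still nested level sets of $s$.

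For (a) and (b), the maps are positive affine. For z-norm, set $g(u)=(u-\hat\mu_T)/\hat\sigma_T$. For temperature/shift, set $g(u)=(u+b)/T$. Each has slope $1/\hat\sigma_T>0$ or $1/T>0$ and is therefore strictly increasing. The only thing to check in (a) is $\hat\sigma_T>0$, which holds whenever the target scores are not all equal. In the degenerate case z-norm is undefined, and the ROC is trivial anyway.

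For (c), I would write the frozen linear head as $\ell_k(e)=w_k^\top e+c_k$ for $k\in\{\text{bona},\text{spoof}\}$. Then
\[
s(e)=(w_{\text{bona}}-w_{\text{spoof}})^\top e+(c_{\text{bona}}-c_{\text{spoof}}).
\]
Substituting $e'=e-\hat\mu_T+\hat\mu_S$ gives $s(e')=s(e)+\delta$ with
\[
\delta=(w_{\text{bona}}-w_{\text{spoof}})^\top(\hat\mu_S-\hat\mu_T),
\]
which is a constant independent of $x$. So $g(u)=u+\delta$ is a translation, hence strictly increasing. Invoking Proposition~\ref{prop:mono} in all three cases gives identical ROC curves and identical EER.

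I expect the main subtlety to be (c), not the algebra. The argument needs the head to be genuinely linear and frozen, and it needs the mean shift to be applied at the layer that feeds that linear head. If any nonlinearity (e.g.\ a hidden layer or normalization) sat between $e$ and the logits, the shift would no longer be a constant in score space and monotonicity could fail. I would state this assumption explicitly. I would also remark that the same constant $\delta$ appears in both logits' difference only because the head is shared across trials. That is what makes the correction a single global $g$ rather than a per-trial one.
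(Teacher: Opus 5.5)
Your proposal is correct and follows the paper's route. The paper justifies the corollary by observing that each correction, once its parameters are fitted, is a single positive affine (hence strictly increasing) map of $s$, and then invoking Proposition~\ref{prop:mono}. Your explicit computation $s(e')=s(e)+\delta$ with $\delta=(w_{\text{bona}}-w_{\text{spoof}})^\top(\hat\mu_S-\hat\mu_T)$ simply makes precise the paper's remark that mean alignment ``adds a constant to each logit.''
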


The corollary covers much of what practitioners try first. It implies that an EER improvement reported for such a method can only stem from an implementation artifact or from label leakage, and conversely that the genuine value of these methods, if any, must appear in \emph{threshold-dependent} metrics. Corrections that can move EER must act non-monotonically on the score: per-utterance cohort normalization (AS-norm \cite{matejka2017}, whose statistics depend on the utterance), full-covariance CORAL \cite{sun2016} (which mixes embedding dimensions before the linear head), or batch-norm statistics adaptation \cite{schneider2020} (which renormalizes internal features channel-wise).

\section{Audited Corrections}
\label{sec:methods}

All corrections use no target labels and keep all model weights frozen. For threshold transfer, each correction is applied identically to the source evaluation scores (or the source set is re-scored by the adapted model), and $\tau$ is re-derived at the source EER point in the corrected score space.

\textbf{(C1) z-norm.} Global standardization of $s$ with unlabeled target mean and standard deviation.

\textbf{(C2) Temperature/shift.} The top and bottom quartiles of target scores are pseudo-labeled bona fide and spoof, respectively; $(T,b)$ are then fitted by logistic negative log-likelihood and decisions are taken at $s'=(s+b)/T=0$. Since the pseudo-classes are separable by construction, the likelihood has no finite minimizer; we fix the procedure to a single deterministic L-BFGS run (100 iterations from $T{=}1$, $b{=}0$), whose endpoint defines the calibration.

\textbf{(C3) Mean alignment.} Embedding-space mean shift to the source statistics, re-scored by the frozen linear output layer acting on the 160-dimensional AASIST readout.

\textbf{(C4) CORAL.} Whitening--recoloring \cite{sun2016} of the embeddings, $e'=(e-\hat\mu_T)\,\hat\Sigma_T^{-1/2}\hat\Sigma_S^{1/2}+\hat\mu_S$ with shrinkage $10^{-4}I$, re-scored by the frozen head.

\textbf{(C5) AS-norm, target cohort.} Per-utterance normalization $s'_i=(s_i-\mu_i)/\sigma_i$, where $\mu_i,\sigma_i$ are computed over the scores of the $k$ unlabeled target utterances most cosine-similar to utterance $i$ in embedding space (leave-one-out), following adaptive cohort selection in ASV \cite{matejka2017}.

\textbf{(C6) AS-norm, source bona fide cohort.} As C5, but the cohort is the labeled \emph{source} bona fide set; no target labels are used.

\textbf{(C7) BN statistics adaptation.} Running means and variances of the 20 BatchNorm layers in the AASIST back-end are re-estimated on unlabeled target data (200 batches of 32, cumulative averaging); all weights and the wav2vec~2.0 front-end are untouched. This is the normalization component of TENT \cite{wang2021tent,schneider2020}.

C1--C3 are monotone (Cor.~\ref{cor:family}); C4--C7 are not.

\section{Experiments}
\label{sec:exp}

\subsection{Setup}
\label{ssec:setup}

\textbf{Model.} SSL-AASIST \cite{tak2022}: a wav2vec~2.0 XLS-R 300M front-end \cite{babu2022} with an AASIST graph-attention back-end \cite{jung2022aasist}, using the authors' released checkpoint trained on ASVspoof~2019~LA (\texttt{LA\_model.pth}, obtained from a public mirror of the official release). Inputs are 64{,}600 samples (about 4\,s) at 16\,kHz, cropped or tiled as in the original recipe; no augmentation is applied at test time. Scores are bona-minus-spoof output logits; embeddings are the 160-dimensional penultimate readout. Inference uses a batch size of 32 in full precision.

\textbf{Data.} \emph{Source:} ASVspoof~2019~LA eval \cite{todisco2019,wang2020asvspoof}; we use a 31{,}662-utterance subset (3{,}300 bona fide / 28{,}362 spoof) corresponding to four of the nine shards of a public parquet mirror, with every label verified against the official protocol file (zero mismatches). Our in-domain EER on this subset is 0.21\%, consistent with the 0.22\% reported by the authors \cite{tak2022}, which validates the pipeline. \emph{Targets:} the complete In-the-Wild corpus \cite{muller2022} (31{,}779 utterances; 19{,}963 bona fide / 11{,}816 spoof; 62.8\% bona fide prior), and an ASVspoof~2021~DF eval subset \cite{yamagishi2021} of 30{,}592 utterances (1{,}149 bona fide / 29{,}443 spoof; 3.8\% bona fide prior) from four of the 80 shards of a public parquet mirror, chosen with a uniform stride and fixed before scoring.

\textbf{Metrics.} EER (oracle threshold) and HTER at the transferred threshold, with FRR/FAR decomposition. The transferred threshold for the uncorrected system is $\tau_{\mathrm{LA}}=4.97$ (LA EER point).

\textbf{Sensitivity to the discrete EER crossing.} With 3{,}300 bona fide trials, the empirical FRR is quantized in steps of $0.03\%$, so the LA EER condition $\mathrm{FRR}=\mathrm{FAR}$ is attained on a short interval of thresholds rather than at a unique point; $\tau_{\mathrm{LA}}=4.97$ is the observed-score grid point minimizing $|\mathrm{FRR}-\mathrm{FAR}|$. Selecting the opposite end of the crossing interval ($\tau=5.44$) moves the ITW baseline HTER from 39.49\% to 40.20\% (FRR $78.7\%\to80.2\%$) and the DF21 baseline HTER from 17.05\% to 17.17\%---under one HTER point in the worst case, leaving every conclusion unchanged. All transferred-threshold numbers below are reported at the stated $\tau_{\mathrm{LA}}$ so that each entry is exactly reproducible.

\begin{figure}[t]
\centering
\includegraphics[width=\linewidth]{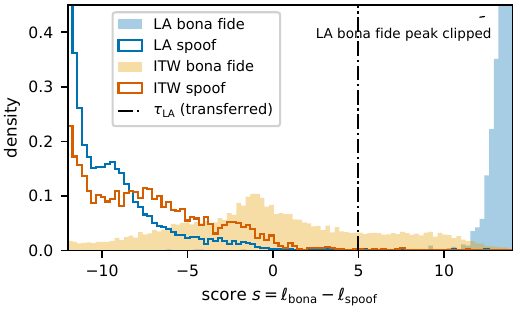}
\caption{Score distributions of the frozen CM on its source domain (ASVspoof~2019~LA eval) and on In-the-Wild (ITW), with the LA-calibrated threshold $\tau_{\mathrm{LA}}$ shown dash-dotted (the LA bona fide peak, density $1.2$, is clipped for readability). The ITW bona fide mass lies largely below $\tau_{\mathrm{LA}}$: 78.7\% of genuine speech is rejected although the ITW EER is a moderate 11.2\%.}
\label{fig:dist}
\end{figure}

\subsection{Results on In-the-Wild}
\label{ssec:itw}

\begin{table}[t]
\caption{Frozen SSL-AASIST on In-the-Wild (full corpus, 31{,}779 utterances). EER uses the oracle threshold; HTER uses the threshold transferred from ASVspoof~2019~LA after the identical correction (Sec.~\ref{sec:methods}). ``M'' marks corrections that are provably monotone (Cor.~\ref{cor:family}). All values are measured.}
\label{tab:itw}
\centering
\footnotesize
\setlength{\tabcolsep}{3.5pt}
\begin{tabular}{@{}llcccc@{}}
\toprule
& Method & EER\,\% & HTER\,\% & FRR\,\% & FAR\,\% \\
\midrule
& LA eval (in-domain, ref.) & 0.21 & -- & -- & -- \\
\midrule
& baseline (no correction) & 11.18 & 39.49 & 78.7 & 0.3 \\
\midrule
M & C1 z-norm & 11.18 & 45.02 & 89.9 & 0.1 \\
M & C2 temperature/shift & 11.18 & \textbf{24.39} & 47.3 & 1.5 \\
M & C3 mean alignment & 11.18 & 46.65 & 93.2 & 0.1 \\
\midrule
& C4 CORAL & \textbf{11.09} & 43.69 & 87.2 & 0.2 \\
& C5 AS-norm (target, $k{=}100$) & 60.16 & 60.70 & 77.5 & 43.9 \\
& C6 AS-norm (src bona, $k{=}100$) & 12.11 & 38.63 & 76.1 & 1.2 \\
& C7 BN adaptation & 11.23 & 39.26 & 78.3 & 0.3 \\
\bottomrule
\end{tabular}
\end{table}

Table~\ref{tab:itw} first confirms Prop.~\ref{prop:mono} \emph{exactly}: C1--C3 reproduce the baseline EER to the last digit, which also serves as an empirical check that the implementation is leakage-free. The non-monotone corrections are free to move EER, and essentially do not: CORAL improves it by under 1\% relative ($11.18\to11.09$), BN adaptation leaves it flat at $11.23$ (the in-domain LA EER is preserved at 0.21\% after adaptation, indicating that the domain gap does not reside in back-end BN statistics), and both AS-norm variants \emph{degrade} the ranking. With larger cohorts, C5 remains collapsed ($k{=}300$: 59.7\%, $k{=}1000$: 59.3\%) and C6 converges back toward the baseline (11.30\%, 11.21\%).

The operating-point picture is entirely different. The raw transferred threshold rejects 78.7\% of genuine speech. Pseudo-label temperature/shift calibration (C2)---which provably cannot change EER---reduces HTER from 39.5\% to 24.4\%, a 38\% relative reduction and the largest deployment gain in the audit. EER-centric reporting makes this method invisible, and makes the baseline's deployment failure invisible, at the same time.

\textbf{Failure mode I: provable invariance.} As stated by Prop.~\ref{prop:mono}, half of the audited toolbox cannot move EER by construction; this is exactly reproduced empirically.

\textbf{Failure mode II: cohort contamination.} AS-norm with an unlabeled target cohort (C5) collapses to 60.2\% EER. In ASV the normalization cohort is impostor-only by construction; in a CM, the unlabeled target cohort is a 63/37 bona/spoof mixture, so the per-utterance statistics are dominated by the test utterance's own class neighborhood, and the normalization removes class separation rather than nuisance offsets. The FRR/FAR decomposition makes the mechanism visible: spoof acceptance explodes (FAR $0.3\%\to43.9\%$) while FRR barely moves---exactly what is expected when each spoof trial is normalized against a neighborhood of other spoof trials. The source bona fide cohort (C6) is safe but inert for the opposite reason: its per-trial statistics vary little across test trials, so the transform approaches a global affine map---EER-invariant in the limit by Cor.~\ref{cor:family}---and the measured EER drifts back to the baseline as the cohort grows ($12.11\%\to11.30\%\to11.21\%$ for $k{=}100/300/1000$, against an $11.18\%$ baseline). ASV-style score normalization should not be transplanted to CMs without rethinking the cohort.

\subsection{Cross-dataset check on ASVspoof 2021 DF}
\label{ssec:df}

\begin{figure}[t]
\centering
\includegraphics[width=0.92\linewidth]{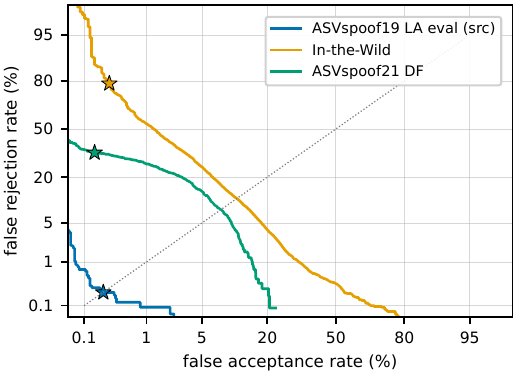}
\caption{DET curves of the frozen CM on the source set (ASVspoof~2019~LA eval) and both target sets (In-the-Wild, ASVspoof~2021~DF). Stars mark the operating point actually reached on each set when the LA-calibrated threshold $\tau_{\mathrm{LA}}$ is transferred; it lies far from the EER point (the diagonal crossing) of each target curve. This distance is the deployment failure that EER alone does not reveal.}
\label{fig:det}
\end{figure}

\begin{table}[t]
\caption{Cross-dataset replication on ASVspoof~2021~DF (30{,}592-utterance subset; 3.8\% bona fide prior). Protocol identical to Table~\ref{tab:itw}.}
\label{tab:df21}
\centering
\footnotesize
\setlength{\tabcolsep}{3.5pt}
\begin{tabular}{@{}lcccc@{}}
\toprule
Method & EER\,\% & HTER\,\% & FRR\,\% & FAR\,\% \\
\midrule
baseline (no correction) & 7.92 & 17.05 & 33.9 & 0.2 \\
C2 temperature/shift & 7.92 & 50.00 & 0.0 & 100.0 \\
C4 CORAL & 7.92 & \textbf{15.46} & 30.5 & 0.5 \\
C6 AS-norm (src bona, $k{=}100$) & 7.92 & 18.08 & 35.8 & 0.4 \\
\bottomrule
\end{tabular}
\end{table}

Table~\ref{tab:df21} and Fig.~\ref{fig:det} show that the audit phenomenon is not an ITW artifact: the in-domain 0.21\% becomes 7.92\% EER on DF21, with a transferred-threshold HTER of 17.1\% (33.9\% FRR). EER is again immobile across all corrections (identical to two decimals).

\textbf{Failure mode III: prior sensitivity.} The best-performing method on ITW, C2, collapses on DF21: its symmetric quartile pseudo-labels implicitly assume a roughly balanced target, but DF21 is 96\% spoof, so the fitted calibration degenerates ($T\to0$) and the system accepts everything (FRR 0\%, FAR 100\%, HTER 50\%); a fully converged refit of the same objective lands elsewhere in the degenerate region but still brings no improvement over the DF21 baseline (HTER 17.4\% vs.\ 17.1\%). An unlabeled correction that halves the deployment error on one target can thus be worse than no correction on another, purely because of the unknown class prior---a quantity that the label-shift literature requires to be estimated explicitly before any output adjustment \cite{saerens2002,lipton2018}. CORAL, by contrast, is the only audited method that improves HTER on DF21 (17.1\%$\to$15.5\%, 9.3\% relative)---while it degraded HTER on ITW. No audited correction is consistently safe across both targets.

\subsection{Why operating-point repair is fragile}
\label{ssec:discussion}

C2 deserves a closer analysis, because it produces both the largest deployment gain (ITW) and the worst collapse (DF21). Its pseudo-classes are, by construction, separated by the inter-quartile gap of the target score distribution, so its logistic likelihood has no finite minimizer: the loss decreases monotonically as $T\to0$ while the decision boundary $-b$ is free to settle anywhere inside the gap. The fitted calibration is therefore determined by the optimizer trajectory rather than by a population quantity, and what the audit measures is the behavior of the whole admissible family. On ITW the inter-quartile gap brackets the true class boundary, so every admissible endpoint is benign: our fixed L-BFGS run yields HTER 24.39\% and a fully converged refit 25.18\%. On DF21 the 96\% spoof prior places both quartiles inside the spoof mass; every admissible boundary is then degenerate, and the two endpoints we measured (HTER 50.0\% and 17.4\%, against a 17.1\% baseline) merely select different members of a family that contains no improvement at all. Prior shift does not just degrade C2---it removes the well-posedness of its fitting problem.

Viewed jointly, the seven corrections factor into three groups. The \emph{monotone} group (C1--C3) cannot move EER and only relocates the operating point; whether the relocation helps is decided by whether the matched statistic is class-balanced. Quartile pseudo-labels (C2) encode an implicit balanced prior, while global moments (C1, C3) align the class \emph{mixture} rather than the bona fide class: on ITW they move the operating point the wrong way, raising FRR from 78.7\% to 89.9\% and 93.2\%, respectively. The \emph{geometry} group (C4, C7) is non-monotone but acts far from the decision boundary: CORAL shifts EER by under 1\% relative yet is the only method improving HTER on DF21, while BN adaptation is a near no-op whose in-domain LA EER stays at 0.21\%---evidence that the LA$\to$ITW gap resides in the score geometry rather than in back-end feature statistics. The \emph{cohort} group (C5, C6) is catastrophic or inert depending solely on cohort composition. No group improves both targets, and no two groups fail in the same way---which is precisely why a single EER number cannot anticipate deployment behavior.

\subsection{Limitations}
\label{ssec:limit}

This audit covers one (state-of-the-art, widely used) CM; the invariance result (Prop.~\ref{prop:mono}) is model-agnostic, but the magnitudes of the non-monotone effects may vary across architectures. The LA eval and DF21 sets are fixed public subsets (4/9 and 4/80 mirror shards, respectively; labels of the LA subset verified against the official protocol), while ITW is used in full; absolute numbers on the full DF21 protocol may differ. C2 was audited with one pseudo-labeling rule (symmetric quartiles); prior-aware variants might be more robust, which we view as the constructive question our audit motivates rather than answers. HTER at the EER-transferred threshold is a single operating point: applications weighting FRR and FAR unequally will observe different absolute numbers, although Prop.~\ref{prop:mono} applies verbatim to any threshold-based cost. The discrete-crossing sensitivity quantified in Sec.~\ref{ssec:setup} (under one HTER point) bounds the protocol's own contribution to the reported gaps. Finally, DF21 contains codec and compression variability that we treat as part of the domain shift rather than as a separately controlled factor.

\section{Conclusions and Recommendations}
\label{sec:concl}

Auditing a frozen state-of-the-art CM yields three findings. First, EER hides deployment failure: a 0.21\%-EER detector operated at its source threshold rejects 78.7\% of genuine In-the-Wild speech while its target EER still reads 11.2\%. Second, the most common unlabeled fixes are structurally unable to repair EER (Prop.~\ref{prop:mono}), and the non-monotone ones that could, do not (under 1\% relative on ITW; none on DF21); genuine ranking repair appears to require adaptation beyond score and statistics matching. Third, operating-point repair is where unlabeled calibration genuinely helps, but it is fragile: its benefit inverts under class-prior shift, and ASV-style cohort normalization fails when the cohort mixes classes.

We recommend that CM evaluations (i)~report HTER (or the FRR/FAR pair) at a threshold transferred from the training-domain operating point, alongside EER; (ii)~treat any claimed EER gain obtained from a monotone score transform as a red flag; (iii)~stress-test unlabeled calibration under at least two target class priors; (iv)~when score normalization is used, report the cohort's composition and verify that its statistics are class-pure; and (v)~check that an unlabeled calibration objective remains well-posed under the deployment prior before trusting its output.

\bibliographystyle{IEEEbib}
\bibliography{refs}

\end{document}